\documentclass[11pt,a4paper]{article}
\usepackage[margin=22mm]{geometry}
\usepackage{amsmath,amssymb,amsthm,booktabs,array,microtype,enumitem,xcolor,graphicx,tikz}
\usepackage[round,authoryear]{natbib}
\usepackage[colorlinks=true,linkcolor=blue!45!black,citecolor=blue!45!black,urlcolor=blue!45!black]{hyperref}
\setlist{itemsep=2pt,topsep=4pt}
\newtheorem{proposition}{Proposition}
\newtheorem{corollary}{Corollary}
\newcommand{\doi}[1]{\href{https://doi.org/#1}{doi:\nolinkurl{#1}}}
\makeatletter
\newenvironment{fixedtable}{\par\addvspace{10pt}\noindent\begin{minipage}{\linewidth}\def\@captype{table}}{\end{minipage}\par\addvspace{10pt}}
\newenvironment{fixedfigure}{\par\addvspace{8pt}\noindent\begin{minipage}{\linewidth}\def\@captype{figure}}{\end{minipage}\par\addvspace{8pt}}
\makeatother
\hypersetup{pdftitle={Scenario Matching for Credit Dependence Stress Tests},pdfauthor={Naoufal NIFA},pdfsubject={Preprint, version 1.0.0; CC BY-NC 4.0},pdfkeywords={scenario coupling, temporal correlation, credit stress testing, assignment, wrong-way risk, empirical tail bounds}}
\begin{document}
\begin{center}
{\LARGE\bfseries Scenario Matching for Credit\\Dependence Stress Tests}\\[8pt]
{\large Temporal correlation constraints and empirical risk bounds}\\[7pt]
{\normalsize Naoufal NIFA}\\[2pt]
{\small Unaffiliated researcher \quad\textbar\quad \href{https://orcid.org/0009-0006-6983-6537}{ORCID: 0009-0006-6983-6537}}\\[5pt]
15 October 2026

\end{center}
\begin{abstract}
We couple two sets of simulated market and credit paths by minimizing absolute temporal-correlation error relative to a prescribed target. Each market path receives exactly one credit path, preserving both empirical distributions of complete scenarios. A perfect-matching test identifies whether a uniform correlation-error tolerance is feasible and determines its smallest attainable value. Further assignments give exact empirical mean-loss and exceedance-probability bounds; quantile bounds follow by binary search over eligible loss thresholds. A synthetic bond-forward credit stress test illustrates the method. The experiments report matching precision, empirical score tails and exposure-grid error, and measure matching and mean-loss computation. The method is independent of the scenario generator and provides a reproducible way to examine dependence assumptions in existing risk engines. Tail measures concern default-probability-weighted credit scores.
\end{abstract}
\noindent\textbf{Keywords:} scenario coupling; temporal correlation; credit stress testing; assignment; wrong-way risk; empirical tail bounds.

\section{Motivation and relationship to existing work}
Market exposures and counterparty credit quality are often simulated by separate engines. Their marginal models may be useful even when their dependence is uncertain. We retain both complete sets of scenario paths and ask which pairings approximate a prescribed \emph{within-path} correlation, and how much credit risk remains undetermined among admissible pairings.

The optimization machinery is established. Hungarian assignment solves additive one-to-one matching problems \citep{kuhn1955,burkard2012}, while marginal-preserving correlation construction has a long history \citep{iman1982}. More directly, \citet{glasserman2018} optimize market--credit coupling with fixed marginal simulations to bound wrong-way risk, including relative-entropy restrictions; \citet{memartoluie2017} also study counterparty risk bounds. The contribution considered here is a temporal-correlation eligibility criterion, its feasibility for finite scenario sets, and its financial interpretation. Neither fixed-marginal coupling nor general loss bounds are claimed as new.

The assignment is a discrete optimal-transport problem. With equal weights and only eligible-edge restrictions, unregularized linear transport admits a permutation optimum \citep{peyre2019}. Entropic regularization changes the objective and generally assigns probability mass to several partners, enabling Sinkhorn scaling \citep{cuturi2013}. Both constructions can preserve intact paths and their weighted empirical marginals. Assignment produces exactly $n$ pairs of weight $1/n$, which can enter an equal-weight risk engine directly. This is the operational distinction; a numerical comparison with Sinkhorn is outside the experiment.

\section{Matching complete scenario blocks}
Let $C_1,\ldots,C_n$ be equally weighted market scenarios and $Z_1,\ldots,Z_n$ equally weighted credit scenarios. Each block contains all jointly generated variables that must remain together. Scalar features $u_{ik}=g(C_i,t_k)$ and $v_{jk}=h(Z_j,t_k)$ are observed on the same $L$ dates and must have nonzero temporal variance. Define
\begin{equation}\label{eq:R}
R_{ij}=\frac{\sum_{k=1}^{L}(u_{ik}-\bar u_i)(v_{jk}-\bar v_j)}
{\sqrt{\sum_k(u_{ik}-\bar u_i)^2\sum_k(v_{jk}-\bar v_j)^2}},
\qquad d_{ij}=|R_{ij}-\rho_\star|,
\end{equation}
where $\rho_\star$ is the historical level-correlation estimate or prescribed stress target. This cross-block array need not be symmetric or positive semidefinite. If $\mathcal E$ contains eligible pairs, choose
\begin{equation}\label{eq:assignment}
\pi_\star\in\arg\min_{\pi\in\Pi_{\mathcal E}}\frac1n\sum_i d_{i,\pi(i)},
\end{equation}
where $\Pi_{\mathcal E}$ is the set of permutations using eligible edges. Reorder the \emph{entire} destination block by the same permutation. The original unconstrained method takes every pair as eligible; financial compatibility conditions may remove edges.

Figure~\ref{fig:method} separates target fitting from risk optimization. Every source receives one destination, and every destination is used once. Choosing each source's closest destination independently can reuse the same destination path and therefore fail to produce a valid assignment.

\begin{fixedfigure}
\centering
\begin{tikzpicture}[
  x=1mm,y=1mm,
  font=\fontsize{9}{11.2}\selectfont,
  every node/.style={outer sep=0pt},
  stage/.style={draw=blue!35!black!40,fill=blue!3,rounded corners=1.5mm,
    line width=.55pt,align=left,inner xsep=2.5mm,inner ysep=3mm},
  riskstage/.style={stage,draw=green!30!black!45,fill=green!3},
  flow/.style={->,draw=black!55,line width=.7pt,>=stealth}
]
\node[stage,text width=37mm,minimum height=58mm] (scenarios) at (21,36) {
  \textbf{1\quad Complete paths}\\[2.4mm]
  \textbf{Market block $C_i$}\\
  Curves and discounting\\
  Signed value $M_i(t)$\\[2.1mm]
  \textbf{Credit block $Z_j$}\\
  Hazard $\lambda_j(t)$\\[2.1mm]
  Use aligned dates; retain\\
  each complete block.
};
\node[stage,text width=42mm,minimum height=58mm] (eligible) at (71.5,36) {
  \textbf{2\quad Define eligible pairs}\\[2.4mm]
  $R_{ij}=\operatorname{Corr}_{\rm time}(M_i,\lambda_j)$\\[1.2mm]
  $d_{ij}=|R_{ij}-\rho_\star|$\\[2.1mm]
  Keep compatible pairs\\
  with $d_{ij}\leq\varepsilon$.\\[2.1mm]
  A perfect matching\\
  must exist. Otherwise,\\
  report infeasibility.
};
\node[stage,text width=55mm,minimum height=26mm] (fit) at (130,53) {
  \textbf{3a\quad Fit correlation}\\[2mm]
  Minimize $\displaystyle\frac1n\sum_i d_{i,\pi(i)}$.\\[1.3mm]
  One partner per path;\\
  every path used exactly once.
};
\node[riskstage,text width=55mm,minimum height=30mm] (risk) at (130,16) {
  \textbf{3b\quad Bound mean and tail}\\[2mm]
  Minimize/maximize $\displaystyle\frac1n\sum_i c_{i,\pi(i)}$.\\[1.5mm]
  Mean: $c_{ij}=L_{ij}$.\\[1.1mm]
  Tail: $c_{ij}=\mathbf 1\{L_{ij}>\theta\}$.\\[1.1mm]
  Use the same eligible graph.
};
\draw[flow] (scenarios.east) -- (eligible.west);
\draw[draw=black!55,line width=.7pt] (eligible.east) -- (97.5,36);
\draw[flow] (97.5,36) |- (fit.west);
\draw[flow] (97.5,36) |- (risk.west);
\end{tikzpicture}%

\caption{Complete-path assignment. Target $\rho_\star$ concerns temporal correlation within each pair; tolerance $\varepsilon$ defines eligible pairs. Correlation fitting and loss or exceedance optimization share this graph but use different costs. Reassignment leaves simulated path values unchanged.}
\label{fig:method}
\end{fixedfigure}

\begin{proposition}[Preservation, feasibility and loss bounds]\label{prop:main}
If $\Pi_{\mathcal E}$ is nonempty, \eqref{eq:assignment} is a linear assignment and has a global optimum. Every such permutation preserves the two empirical marginal distributions on complete blocks. A uniform error tolerance $\varepsilon$ is attainable exactly when the graph
\begin{equation}\label{eq:graph}
\mathcal E_\varepsilon=\{(i,j)\in\mathcal E:d_{ij}\leq\varepsilon\}
\end{equation}
has a perfect matching. For any finite fixed pairwise monetary loss $L_{ij}$ and nonempty $\Pi_{\mathcal E_\varepsilon}$, the attainable expectation endpoints are
\begin{equation}\label{eq:bounds}
\underline\ell=\min_{\pi\in\Pi_{\mathcal E_\varepsilon}}\frac1n\sum_i L_{i,\pi(i)},
\qquad
\overline\ell=\max_{\pi\in\Pi_{\mathcal E_\varepsilon}}\frac1n\sum_i L_{i,\pi(i)}.
\end{equation}
They are obtained by ordinary assignments with edge costs $L$ and $-L$.
\end{proposition}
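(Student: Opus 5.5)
I will treat the four assertions of Proposition~\ref{prop:main} in order. Each reduces to a finiteness or bijection argument, and the only place needing care is the precise meaning of ``attainable''.

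\textbf{Existence of an optimum.} Since $\Pi_{\mathcal E}$ is a nonempty subset of the $n!$ permutations of $\{1,\ldots,n\}$, it is finite. The objective $\frac1n\sum_i d_{i,\pi(i)}$ is finite for every $\pi$, because the nonzero temporal variances make each $R_{ij}$ well defined. A minimum over a finite nonempty set is therefore attained. To call the problem a linear assignment, I encode $\pi$ as a permutation matrix $X$ with $X_{ij}=0$ off $\mathcal E$. Equivalently, I set $d_{ij}=+\infty$, or a large constant $M>n\max d$, on ineligible edges. The objective is then $\frac1n\sum_{ij}d_{ij}X_{ij}$, which is linear in $X$. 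By Birkhoff--von Neumann, the LP relaxation over doubly stochastic matrices supported on $\mathcal E$ has a permutation vertex optimum, as \citet{kuhn1955,burkard2012} note.

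\textbf{Preservation of marginals.} The paired empirical law is $\frac1n\sum_i\delta_{(C_i,Z_{\pi(i)})}$. Its first marginal is $\frac1n\sum_i\delta_{C_i}$ directly. Its second marginal is $\frac1n\sum_i\delta_{Z_{\pi(i)}}=\frac1n\sum_j\delta_{Z_j}$, by reindexing $j=\pi(i)$, which is a bijection. Because whole blocks are moved, this equality holds for the laws of complete blocks, not merely for the features $u$ and $v$.

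\textbf{Feasibility of a tolerance.} I read ``$\varepsilon$ is attainable'' as: there exists $\pi\in\Pi_{\mathcal E}$ with $\max_i d_{i,\pi(i)}\le\varepsilon$. The equivalence then follows by unpacking definitions. Such a $\pi$ uses only edges $(i,\pi(i))\in\mathcal E_\varepsilon$, and its graph is a perfect matching of the bipartite graph $\mathcal E_\varepsilon$. Conversely, a perfect matching defines a bijection $\pi$ with every $d_{i,\pi(i)}\le\varepsilon$. As a consequence, the smallest attainable $\varepsilon$ equals one of the finitely many values $d_{ij}$. It can be found by sorting these values and searching for the least one admitting a perfect matching, using monotonicity of $\mathcal E_\varepsilon$ in $\varepsilon$ together with Hopcroft--Karp or Hall's condition. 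This bottleneck-assignment remark is optional.

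\textbf{Loss bounds.} $\Pi_{\mathcal E_\varepsilon}$ is finite and nonempty and each $L_{ij}$ is finite, so both the min and the max in \eqref{eq:bounds} are attained. Every expectation under an admissible pairing lies in $[\underline\ell,\overline\ell]$, and both endpoints are realized. The maximum equals $-\min_\pi\frac1n\sum_i(-L_{i,\pi(i)})$, so both endpoints are linear assignments on $\mathcal E_\varepsilon$, by the same encoding as in the first step, with costs $L$ and $-L$. Negative costs are harmless; one can add a constant to make them nonnegative, since that shifts every permutation's cost by the same amount.

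\textbf{Main obstacle.} The substance lies in fixing what ``attainable'' means, namely a uniform bound on every pair, and in correctly excluding ineligible edges. If one uses a large penalty $M$ rather than $+\infty$, one must check that $M$ exceeds the spread of feasible costs. Then any optimum under the penalty avoids ineligible edges whenever $\Pi_{\mathcal E}\neq\emptyset$.
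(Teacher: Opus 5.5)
Your proposal is correct and follows the same route as the paper's proof: finiteness of the nonempty set of eligible permutations gives attainment of every extremum, reindexing by the bijection preserves both complete-block marginals, eligible permutations are exactly the perfect matchings of $\mathcal E_\varepsilon$, and negating costs turns the maximization into an ordinary assignment. The permutation-matrix and penalty encoding, and the Birkhoff--von Neumann remark, add detail the paper leaves implicit without changing the argument.
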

\begin{proof}
All objectives are sums of selected edge costs under bijection constraints. Relabeling blocks leaves their empirical multisets unchanged. A feasible permutation is precisely a perfect matching of the eligible graph. The finite set of feasible permutations attains both loss extrema; changing the sign of costs converts maximization to minimization.
\end{proof}

\begin{corollary}[Empirical tail and quantile bounds]\label{cor:tail}
Assume $\Pi_{\mathcal E_\varepsilon}\ne\varnothing$. At any threshold $\theta$, set
$B_{ij}(\theta)=\mathbf 1\{L_{ij}>\theta\}$ and
$p_\pi(\theta)=n^{-1}\sum_i B_{i,\pi(i)}(\theta)$.
The exact attainable empirical exceedance bounds are
\begin{equation}\label{eq:tail}
\underline p(\theta)=\min_{\pi\in\Pi_{\mathcal E_\varepsilon}}p_\pi(\theta),
\qquad
\overline p(\theta)=\max_{\pi\in\Pi_{\mathcal E_\varepsilon}}p_\pi(\theta).
\end{equation}
They are linear assignments with costs $B(\theta)$ and $-B(\theta)$.
For $\alpha\in(0,1]$, let $k=\lceil n\alpha\rceil$ and define $q_\alpha(\pi)$ as the $k$th smallest selected loss, including ties. With
$\mathcal S=\{L_{ij}:(i,j)\in\mathcal E_\varepsilon\}$,
\begin{align}\label{eq:quantile}
\min_\pi q_\alpha(\pi)&=\min\{\theta\in\mathcal S:\underline p(\theta)\leq(n-k)/n\},\\
\max_\pi q_\alpha(\pi)&=\min\{\theta\in\mathcal S:\overline p(\theta)\leq(n-k)/n\}.\notag
\end{align}
\end{corollary}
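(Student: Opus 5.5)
The exceedance part of Corollary~\ref{cor:tail} is Proposition~\ref{prop:main} with a different cost, so I will only sketch it. I expect the real work to be in the quantile formulas \eqref{eq:quantile}, and I will reduce them to the exceedance bounds through a single counting equivalence.

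\emph{Exceedance bounds.} For fixed $\theta$, the matrix $B(\theta)$ is a finite pairwise cost, namely a $\{0,1\}$-valued loss. The map $p_\pi(\theta)$ is an average of selected edge costs over $\pi\in\Pi_{\mathcal E_\varepsilon}$. Proposition~\ref{prop:main}, applied with $L$ replaced by $B(\theta)$, then gives that both extrema in \eqref{eq:tail} are attained on the nonempty finite feasible set. It also gives that they are linear assignments with costs $B(\theta)$ and $-B(\theta)$.

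\emph{Key equivalence.} I will show that for every feasible $\pi$ and every real $\theta$,
\[
q_\alpha(\pi)\le\theta\iff p_\pi(\theta)\le (n-k)/n .
\]
Sort the $n$ selected losses $L_{i,\pi(i)}$ with multiplicity, and let $q_\alpha(\pi)$ be the $k$th of them. Then $q_\alpha(\pi)\le\theta$ holds exactly when at least $k$ selected losses are $\le\theta$. That is the same as saying at most $n-k$ selected losses are $>\theta$, and that count is $n\,p_\pi(\theta)$. Because ties are counted with multiplicity, this step needs no further care. I also note two facts used below:
\begin{itemize}
\item $q_\alpha(\pi)$ is itself a selected loss on an edge of $\mathcal E_\varepsilon$, so $q_\alpha(\pi)\in\mathcal S$;
\item $\theta\mapsto p_\pi(\theta)$ is nonincreasing, and hence so are $\underline p$ and $\overline p$.
\end{itemize}

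\emph{Minimum quantile.} Taking the minimum over $\pi$ in the equivalence gives
\[
\min_\pi q_\alpha(\pi)\le\theta \iff \exists\,\pi:\ p_\pi(\theta)\le(n-k)/n \iff \underline p(\theta)\le (n-k)/n .
\]
So the set of $\theta$ satisfying the right-hand condition is exactly the ray $[\min_\pi q_\alpha(\pi),\infty)$. Its left endpoint lies in $\mathcal S$, because it equals $q_\alpha(\pi)$ for some minimizing $\pi$. Therefore the smallest element of $\mathcal S$ in this ray is that endpoint, which is the first line of \eqref{eq:quantile}.

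\emph{Maximum quantile.} The same argument applies:
\[
\max_\pi q_\alpha(\pi)\le\theta \iff \forall\,\pi:\ p_\pi(\theta)\le(n-k)/n \iff \overline p(\theta)\le(n-k)/n .
\]
Again the maximizer's value lies in $\mathcal S$, which gives the second line.

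In both cases the admissible set is nonempty, since at $\theta=\max\mathcal S$ every exceedance probability is $0\le(n-k)/n$. Monotonicity in $\theta$ is what justifies binary search over the sorted set $\mathcal S$, with one assignment solved per probe. Once the counting equivalence is written in the form above, every step reduces to it or to Proposition~\ref{prop:main}.
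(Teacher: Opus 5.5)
Your proposal is correct and follows the paper's own proof. Both apply Proposition~\ref{prop:main} to the indicator costs $B(\theta)$. Both then invert the quantile through the equivalence that the $k$th selected loss is at most $\theta$ exactly when at most $n-k$ selected losses exceed $\theta$, requiring this for some feasible permutation for the minimum and for all of them for the maximum. You also spell out two details the paper only compresses into ``extrema attained on the finite loss set'': that the admissible thresholds form the ray $[\min_\pi q_\alpha(\pi),\infty)$, and that its endpoint lies in $\mathcal S$.
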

\begin{proof}
An indicator is a fixed edge cost, so Proposition~\ref{prop:main} applies. The $k$th selected order statistic is at most $\theta$ exactly when at most $n-k$ selected losses strictly exceed $\theta$. For the minimum quantile this condition must hold for at least one feasible permutation; for the maximum it must hold for all of them. This gives the two inversions, with extrema attained on the finite loss set.
\end{proof}

The tail bounds are pointwise: their optimizing permutations can change with $\theta$. Consequently, an entire envelope need not be the tail curve of one feasible coupling. Quantile bounds require threshold searches, not merely minimization or maximization of mean loss. These results specialize standard assignment to indicator costs; they are not claimed as a new general tail-risk theorem.

Both functions in \eqref{eq:tail} are nonincreasing step functions of $\theta$. After sorting the distinct values in $\mathcal S$, binary search locates each quantile endpoint using at most $\lceil\log_2|\mathcal S|\rceil$ assignment queries: at most 16 when $n=256$, since $|\mathcal S|\leq n^2$. This counts search queries only; the implementation makes up to two additional calls to verify the endpoint and its predecessor and retain an attaining permutation.

A low average correlation error does not guarantee that every pair meets tolerance. By Hall's theorem, feasibility requires every subset of source scenarios to have at least as many destination neighbors \citep{hall1935}. The minimum feasible tolerance is the bottleneck value $\varepsilon_{\min}=\min_{\pi\in\Pi_{\mathcal E}}\max_i d_{i,\pi(i)}$. An infeasible request should be reported, not silently relaxed. Dense construction costs $O(n^2L)$ arithmetic and $O(n^2)$ storage; efficient dense Hungarian implementations take $O(n^3)$ worst-case time.

Each edge loss uses a fixed policy and must not depend on other selected pairs. A uniformly random permutation has expected loss $n^{-2}\sum_{ij}L_{ij}$, an independence reference that need not satisfy \eqref{eq:graph}. An additive loss $a_i+b_j$ has invariant expectation: changing expected loss requires an interaction between blocks.

\subsection{Interpretation and model scope}\label{sec:scope}
Four distinctions govern the method and the numerical results.
\begin{enumerate}[leftmargin=*,label=\arabic*.]
\item \textbf{Targeted dependence.} The feature is correlation across dates within a pair, not across scenarios at a future date. Level matching addresses the coincidence of high market values and high credit-risk states; first-difference matching addresses co-movement of changes. Increments are not generally innovations without accounting for conditional drift. Neither statistic uniquely determines monetary loss.
\item \textbf{Finite-window interpretation.} Window length, frequency and the inclusion of $t_0$ must match the target-estimation convention. A deterministic starting observation changes each path's sample mean and centered cross-products; Pearson correlation remains centered on the sample mean, not the initial value. Excluding $t_0$ does not remove initial conditions from the dynamics. Persistent levels and short windows limit statistical inference. Here targets are prescribed stresses and the numerical generators are uncalibrated; no historical forecasting evidence is claimed.
\item \textbf{Meaning of risk bounds.} Ranges are attainable extrema over pairings of fixed scenario sets, not sampling confidence intervals. In the application, $L_{ij}$ is already a default-probability-weighted credit score. Its empirical tail and quantiles describe that score, not realized jump-to-default cash losses or regulatory portfolio VaR. A pointwise tail envelope need not arise from one coupling at every threshold.
\item \textbf{Joint-model consistency.} Keeping curves, factors, discounting and market values together preserves the market block. Offline full-path pairing does not establish causal consistency, a joint risk-neutral law or a calibrated CVA price. Policies cannot use the pairing as foreknowledge. Structural dependence of credit dynamics on market states requires conditional modeling or additional restrictions.
\end{enumerate}

Only aligned features and pairwise costs are required. Gaussian models such as G2++, square-root models such as CIR2++ \citep{brigo2006}, affine arbitrage-free Nelson--Siegel \citep{christensen2011}, and HJM forward-curve models \citep{hjm1992} are alternative input generators. The numerical example uses a Gaussian model for tractable bond values and exact sampling of discount integrals. A historically supported target band can replace a point target through the cost $\operatorname{dist}(R_{ij},[\rho_-,\rho_+])$, subject to validation.

\section{Credit dependence stress test}
Wrong-way risk arises when positive exposure is high as counterparty credit deteriorates. We impose temporal-correlation restrictions on fixed market and credit scenario sets, retaining exposure and discounting in the same market block. Let $0=t_0<\cdots<t_m=H$ with $\Delta=H/m$, bond prices $P_i(t,T)$ and realized discount factors $D_i(t)$.

Consider a long forward on a zero-coupon bond, with notional $N$, delivery $H$ and bond maturity $T_B>H$. At strike $K=P(0,T_B)/P(0,H)$, signed market value and positive exposure are
\begin{equation}\label{eq:mtm}
M_i(t)=N[P_i(t,T_B)-K P_i(t,H)],\qquad E_i(t)=\max\{M_i(t),0\}.
\end{equation}
The credit block supplies hazard $\lambda_j(t_k)>0$, held constant during the following interval. Survival and interval-default weights are
\begin{equation}\label{eq:survival}
S_j(t_k)=\exp\left[-\Delta\sum_{\ell=0}^{k-1}\lambda_j(t_\ell)\right],\qquad
q_{jk}=S_j(t_k)\bigl[1-e^{-\lambda_j(t_k)\Delta}\bigr].
\end{equation}
For loss-given-default $\gamma$, the pairwise discounted credit score is
\begin{equation}\label{eq:credit}
L_{ij}=\gamma\sum_{k=0}^{m-1}D_i(t_{k+1})E_i(t_{k+1})q_{jk}.
\end{equation}
The weights telescope to $1-S_j(H)$. Exposure and discounting use an interval-end approximation; the numerical grid study below examines its precision.

\paragraph{Targeted dependence and measured loss.}
The feature is temporal correlation between \emph{signed} value $M_i(t_k)$ and hazard $\lambda_j(t_k)$ at period starts. A positive target aligns above-average values with above-average hazards within each pair. Signed values remain informative even when positive exposure is identically zero. The monetary score instead uses positive period-end exposure, default weights and discounting. For example, EUR\,1 million positive exposure with $60\%$ loss-given-default and $2.5\%$ annual hazard contributes approximately EUR\,1,249 over one month before prior-survival weighting and discounting. Pearson correlation removes levels and scales, whereas this score retains them; the target therefore does not determine a unique financial outcome. Collateral, netting, bilateral default and credit-dependent closeout are excluded from this example.

\section{Numerical design and results}
The curve generator has a flat initial forward rate of $2.5\%$. Two zero-start Gaussian Ornstein--Uhlenbeck factors have mean reversions $(0.1,0.5)$, volatilities $(0.005,0.01)$ and Brownian correlation $-0.5$; a deterministic shift fits the initial curve. Factors and the integrated short rate are sampled jointly from their exact Gaussian transition. Bonds use the generator's conditional valuation formula, and $D_i(t)=\exp[-\int_0^t r_i(u)\,du]$.

An independently generated log driver satisfies $dw=-0.8w\,dt+0.6\,dB$, starting at zero, with exact transitions. Set
\begin{equation}\label{eq:driver}
\lambda_j(t)=\lambda_0\exp\left[w_j(t)-\tfrac12\operatorname{Var}w_j(t)\right],
\qquad \lambda_0=0.025.
\end{equation}
The variance correction fixes the unconditional hazard mean. We use $N=\text{EUR }100$ million, $\gamma=0.6$, $H=1$ year, $T_B=10$ years and twelve monthly periods. Features use the twelve period starts, including $t_0$; losses use the conventions in \eqref{eq:credit}.

Ten independent batches each contain $n=256$ market paths and $n=256$ credit paths. The same scenario sets are used for targets $\rho_\star\in\{-0.5,0.0,0.5,0.8\}$, with tolerance $\varepsilon=0.2$ fixed throughout. Feasibility is checked separately before a target is evaluated. This separates changing dependence assumptions from changing either marginal model. The level and $t_0$ conventions are fixed; no sensitivity comparison of those conventions is included.

\subsection{Target sensitivity and monetary consequences}
Table~\ref{tab:applicationresults} reports ten-batch averages. Achieved correlation and mean absolute error describe the minimum-error assignment. Its credit score is compared with the minimum and maximum expected scores over the same eligible graph.

\begin{fixedtable}
\centering\small
\setlength{\tabcolsep}{5pt}
\caption{Target sweep on the same ten scenario sets, $n=256$, tolerance $0.2$. Monetary values are EUR thousands. Dashes denote infeasibility; all reported monetary averages use ten feasible batches.}
\label{tab:applicationresults}
\begin{tabular}{rrrrrr}
\toprule
Target & Feasible & Achieved & Mean error & Selected score & Permitted range\\
\midrule
$-0.5$ & 10/10 & $-0.4996$ & 0.0075 & 6.14 & [5.07, 7.84]\\
$0.0$ & 10/10 & $-0.0002$ & 0.0154 & 7.46 & [5.81, 9.43]\\
$0.5$ & 10/10 & $0.4996$ & 0.0061 & 8.90 & [7.06, 10.48]\\
$0.8$ & 0/10 & --- & --- & --- & ---\\
\bottomrule
\end{tabular}
\end{fixedtable}

Targets $-0.5$, $0.0$ and $0.5$ are feasible in every batch. The selected mean score rises from EUR\,6,142 to EUR\,7,461 and EUR\,8,899, respectively. The increase from $-0.5$ to $0.5$ is $44.9\%$, using exactly the same marginal scenario sets. The $0.0$-to-$0.5$ increase is EUR\,1,438 ($19.28\%$), with paired Monte Carlo standard error EUR\,25. The random-pairing reference is EUR\,7,429; its closeness to the target-zero result does not make zero temporal correlation equivalent to independence.

At target $0.8$, no batch admits a complete matching within $0.2$. The computed minimum tolerances range approximately from $0.2131$ to $0.3765$. Thus a more positive target cannot simply be imposed on these fixed scenario sets under the requested precision. No monetary result is substituted for the failed cases.

\begin{fixedfigure}
\centering
\includegraphics[width=\linewidth]{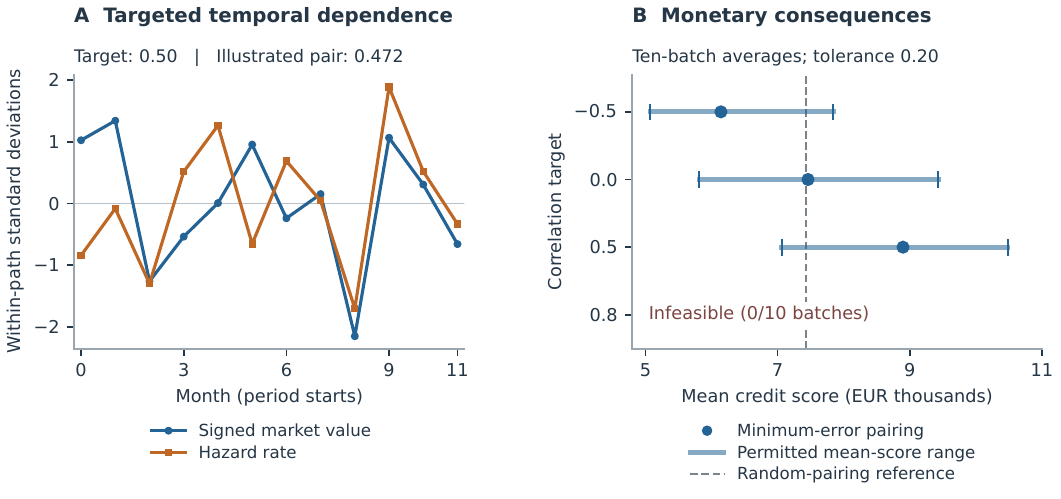}
\caption{Targeted dependence and monetary sensitivity. Left: separately standardized signed market value and hazard for the first market path in seed 20261014 and its partner selected for target $0.5$ (credit path 92). Zero is each path's own mean. Right: dots give average minimum-error scores; horizontal segments join average batch-specific monetary extrema at tolerance $0.2$. The dashed line is random pairing. The row for target $0.8$ records infeasibility.}
\label{fig:credit}
\end{fixedfigure}

The illustrated pair has correlation $0.472$; the ten-batch average at target $0.5$ is $0.4996$. A standardized signed value above zero need not be positive in euros. The monetary calculation applies the positive-part rule to actual values, which is why EUR\,100 million notional produces scores in thousands rather than millions. At target $0.5$, the permitted mean-score range is EUR\,7,061--10,483, with width EUR\,3,422, or $38.5\%$ of the selected score. Near-target correlation leaves economically material dependence uncertainty.

\subsection{Tail probabilities and scenario quantiles}
To illustrate Corollary~\ref{cor:tail}, fix an illustrative score threshold of EUR\,20,000 before calculation. The same twenty target-$0.0$/$0.5$ cases supply exact exceedance bounds and empirical 95th/99th quantile bounds. For 256 equally weighted paths, these quantiles are the 244th and 254th selected order statistics, without interpolation.

\begin{fixedtable}
\centering\small
\caption{Empirical score tails at tolerance $0.2$: averages of ten batch-specific statistics and exact endpoints. Quantiles are EUR thousands; exceedance probabilities are percentages.}
\label{tab:tail}
\begin{tabular}{llrrr}
\toprule
Target & Statistic & Selected & Minimum & Maximum\\
\midrule
$0.0$ & $\Pr(L>20{,}000)$ (\%) & 12.38 & 4.26 & 22.11\\
$0.0$ & 95th percentile (EUR k) & 28.06 & 19.33 & 40.40\\
$0.0$ & 99th percentile (EUR k) & 41.88 & 27.76 & 59.07\\
\addlinespace
$0.5$ & $\Pr(L>20{,}000)$ (\%) & 16.45 & 8.28 & 25.47\\
$0.5$ & 95th percentile (EUR k) & 33.57 & 23.72 & 49.31\\
$0.5$ & 99th percentile (EUR k) & 50.01 & 34.13 & 73.54\\
\bottomrule
\end{tabular}
\end{fixedtable}

Under minimum-error pairing, the fraction of scores above EUR\,20,000 rises from $12.38\%$ to $16.45\%$. At target $0.5$, however, admissible pairings allow average exceedance probabilities from $8.28\%$ to $25.47\%$. Likewise, the selected 99th percentile is approximately EUR\,50,010, while the average attainable endpoint range is EUR\,34,133--73,541. This translates dependence uncertainty into a score-budget exceedance question rather than only a change in mean cost.

Each table entry averages ten batch-specific statistics; quantiles are not computed after pooling batches. Endpoint optimizers can differ across statistics, so the best or worst values in one row need not coexist with those in another. Witness permutations and threshold certificates verify attainability and the quantile inversion. At this sample size the 99th percentile is determined by only the top few observations, making the finite-scenario interpretation in Section~\ref{sec:scope} particularly relevant.

\subsection{Tolerance feasibility and correlation precision}
At target $0.5$, mean absolute error is $0.0061$, but the largest selected-pair error over all ten batches is $0.0861$. At target $0.0$, the corresponding maximum is $0.1983$. Average fit and a uniform per-pair guarantee are distinct criteria.

For target $0.5$, tightening $\varepsilon$ from $0.2$ to $0.1$ remains feasible in every batch and reduces average monetary-range width from EUR\,3,422 to EUR\,2,833, a $17.2\%$ reduction. Minimum-error pairings are unchanged because they already meet $0.1$. At $\varepsilon=0.05$, only eight batches are feasible. The two failing seeds have computed bottleneck values
\begin{equation}\label{eq:failedbottlenecks}
\varepsilon_{\min}\approx0.0765\quad(20261208),\qquad
\varepsilon_{\min}\approx0.0717\quad(20261887).
\end{equation}
These values minimize the \emph{largest} error, rather than reporting the worst error of a minimum-average-error assignment. Full numerical precision and witness permutations are supplied in the data. At $0.05$, the first failure has a Hall-deficient set of 255 market paths with only 254 eligible credit paths; the second contains a market path with no eligible partner. Increasing the tolerance addresses a concrete matching obstruction.

The monetary bounds use the entire permitted band. At target $0.5$ and tolerance $0.2$, the minimum- and maximum-loss pairings have average correlations $0.4053$ and $0.5386$, compared with $0.4996$ for the minimum-error pairing. The remaining monetary range therefore holds eligibility fixed, not achieved mean correlation exactly fixed.

\subsection{Precision of the exposure calculation}
Exact factor transitions and discount integrals do not remove interval-end exposure approximation. A separate study generates ten new nested-path batches, fixes the monthly pairings and hazards, and subdivides the exposure/discount evaluation grid. Default probabilities remain unchanged across grids.

\begin{fixedtable}
\centering\small
\caption{Exposure-grid sensitivity: ten new nested-path batches, $n=256$, target $0.5$. Monthly pairings and hazards are fixed. Gaps use the mean 192-date score as denominator.}
\label{tab:creditgrid}
\begin{tabular}{rrr}
\toprule
Evaluation dates & Mean score (EUR) & Gap to 192 dates (\%)\\
\midrule
12 & 9,584 & 4.95\\
24 & 9,347 & 2.35\\
48 & 9,224 & 1.01\\
96 & 9,163 & 0.34\\
192 & 9,132 & 0.00\\
\bottomrule
\end{tabular}
\end{fixedtable}

Moving from 12 to 192 evaluation dates lowers the target-$0.5$ score by EUR\,452, or $4.95\%$ of the finer-grid value; the paired Monte Carlo standard error is EUR\,18. The 96-to-192 gap is EUR\,31, or $0.34\%$. The 192-date value is a numerical reference. Comparisons within Table~\ref{tab:creditgrid} isolate grid sensitivity; the main monthly table uses different draws. Correlation precision cannot substitute for resolving the financial loss calculation in time.

Independent scalar cash-flow checks verify the credit formulas and saved permutations. Exhaustive small assignments separately validate the theoretical extension. Arithmetic agreement, Monte Carlo variability and time-grid sensitivity are reported as different aspects of numerical precision.

\subsection{Performance and scalability}
The credit benchmark uses five seeds at each size, twelve monthly observations, target $0.5$ and tolerance $0.2$. It was measured on an Apple M3 MacBook Air with 8 GB RAM, Python 3.12.14, NumPy 2.5.4 and SciPy 1.18.1; numerical backend thread-limit variables were set to one. Warm-up is excluded. Table~\ref{tab:performance} times path generation, pairwise loss construction, correlation matching, feasibility checks and \emph{two mean-loss bounds}. Imports, disk output and validation are excluded. New bottleneck, tail and quantile searches are not included in these timings.

\begin{fixedtable}
\centering\small
\caption{Credit matching and mean-loss bounds: five seeds per size. Time is the median workflow duration; error is mean absolute correlation deviation averaged across batches. Storage covers three dense matrices only.}
\label{tab:performance}
\begin{tabular}{rrrr}
\toprule
Paths & Workflow (s) & Mean error & Matrix storage (MiB)\\
\midrule
128 & 0.006 & 0.01285 & 0.375\\
256 & 0.018 & 0.00617 & 1.5\\
512 & 0.084 & 0.00314 & 6\\
1,024 & 0.502 & 0.00158 & 24\\
2,048 & 4.529 & 0.00079 & 96\\
\bottomrule
\end{tabular}
\end{fixedtable}

The matching-and-mean-loss workflow takes about 18 milliseconds at 256 scenarios and 4.53 seconds at 2,048; the latter ranges from 3.88 to 6.47 seconds across seeds. At 2,048, the complete monetary-bound routine has median time 3.92 seconds, compared with 0.46 seconds for the correlation-matching routine. The bounds call performs two assignments and includes its own preparation and feasibility work. The recorded clocks identify the expensive routine but do not measure dual degeneracy or augmenting-path lengths.

The unmasked credit matrix has the exact factorization
\begin{equation}\label{eq:factorization}
L=BQ^\top,\qquad B_{ik}=\gamma D_i(t_{k+1})E_i(t_{k+1}),\quad Q_{jk}=q_{jk},
\qquad\operatorname{rank}(L)\leq m=12.
\end{equation}
Likewise, centering twelve-date features gives $\operatorname{rank}(R)\leq11$; taking entrywise absolute errors can increase rank but does not guarantee full rank. These algebraic bounds do not establish solver runtime. Forbidden edges are represented by infinite costs, and the solver operates on that restricted input. The implementation uses SciPy's modified Jonker--Volgenant algorithm for the same assignment objective solved by Hungarian methods.

All 25 credit benchmark cases are feasible. Mean matching error falls from $0.01285$ at 128 paths to $0.00079$ at 2,048, with a substantial increase in computation. Three dense double-precision matrices require $24n^2$ bytes: 96 MiB at 2,048, plus 4 MiB for a Boolean mask and additional working storage. At 10,000 paths those matrices alone would require 2.4 GB; that size was not timed.

The factors in \eqref{eq:factorization} need only $2nm$ numbers: 0.375 MiB instead of 32 MiB for one dense matrix at 2,048 paths. Exploiting this during optimization requires an implicit-cost solver; the measured implementation still materializes its arrays. Reusing matrices across targets avoids repeated construction, while candidate screening may reduce memory at the cost of excluding feasible or optimal pairings. Computing both exceedance endpoints at a fixed threshold requires two assignments. For each quantile endpoint, binary search instead uses one assignment per query, as described after Corollary~\ref{cor:tail}; the implemented search and verification require at most 18 calls at $n=256$. Sorting the $E=|\mathcal E_\varepsilon|$ eligible losses costs $O(E\log E)$ separately.

\section{Implementation and practical use}
The procedure returns a block permutation, its correlation errors, the minimum feasible tolerance and risk bounds for a fixed policy. Applying the loss-bound routine to $\mathbf 1\{L>\theta\}$ requires no change to the assignment algorithm. Exhaustive small examples with ties, negative losses and infeasible graphs verify the tail and quantile extension.

For deployment, specify the exposure and closeout calculation before pairing, estimate marginal dynamics and dependence targets on training data, select tolerances on validation data, and assess outcomes on held-out periods against a credible joint or conditional model. Collateral, netting and multiple counterparties require corresponding cash-flow and dependence constraints. The useful output is a transparent account of which risk conclusions are fixed by the scenario sets and which still depend on the chosen coupling.

\section*{Reproducibility}
The package contains the scenario sets, seeds, permutations, financial formulas, target and tail studies, independent audits, timing records, nested-grid data and vector figures. The archived version of this work is available on Zenodo at
\doi{10.5281/zenodo.23405059}. All reported financial experiments use the disclosed synthetic inputs. The accompanying reproducibility archive provides all inputs needed to reproduce the reported results. This paper and its original non-code supporting materials are licensed under \href{https://creativecommons.org/licenses/by-nc/4.0/}{CC BY-NC 4.0}; computational code is distributed under the PolyForm Noncommercial License 1.0.0.

\section*{Declaration of AI assistance for code development}

AI tools assisted with developing the code used to generate the reported results. The author takes full responsibility for the methodology, code, numerical results, interpretations, and final content.

\begingroup\small\setlength{\bibsep}{0pt}

\endgroup
\end{document}